\newtheorem{dummy}{Dummy}
\newtheorem{theorem}[dummy]{Theorem}
\newtheorem{proposition}[dummy]{Proposition}
\newtheorem{corollary}[dummy]{Corollary}
\theoremstyle{definition}
\newtheorem{definition}{Definition}
\newtheorem{remark}[dummy]{Remark}
\newcommand{\ignore}[1]{}
\date{23.3.2015}
\author{S. Pumpl\"un}
\email{susanne.pumpluen@nottingham.ac.uk}
\address{School of Mathematical Sciences\\
University of Nottingham\\
University Park\\
Nottingham NG7 2RD\\
United Kingdom
}
\keywords{Skew polynomials, Ore rings, finite rings, nonassociative algebras, linear code,
$\sigma$-constacyclic code, ideal $\sigma$-code, module $\sigma$-code,
$(\sigma,\delta)$-code.}
\subjclass[2010]{Primary: 94B05, 94B15; Secondary: 16S36, 17A35}
\begin{document}

\title[Linear codes and nonassociative algebras  from skew-polynomial rings]
{A note on linear codes and nonassociative algebras obtained from skew-polynomial rings}

\begin{abstract}
Different approaches to construct linear codes using skew polynomials can be unified by using the nonassociative algebras built from
skew-polynomial rings by Petit.
\end{abstract}

\maketitle

\section*{Introduction}

In recent years, several classes of linear codes were obtained  from skew-polynomial rings (also called Ore rings).
Using this approach, self-dual codes with a better minimal
distance for certain lengths than previously known  were constructed:
while the classical
 \emph{cyclic codes} of length $m$ over a finite field $\mathbb{F}_q$ are obtained from ideals
 in the commutative ring $\mathbb{F}_q[t]/(t^m-1)$, and
\emph{constacyclic codes}
from ideals in  the commutative ring $\mathbb{F}_q[t]/(t^m-d)$, $d\in\mathbb{F}_q$,
\emph{ideal $\sigma$-codes} are associated with left ideals
in the non-commutative ring
$\mathbb{F}_q[t;\sigma]/(t^m-1)$, where $t^m-1\in R$ is a two-sided element in the twisted polynomial ring
 $\mathbb{F}_q[t;\sigma]$,
 $\sigma\in {\rm Aut}(\mathbb{F}_q)$,
see \cite{BU09.2}. Because $t^m-1$ is required to be a two-sided element in order for
$\mathbb{F}_q[t;\sigma]/(t^m-1)$ to be a ring,
 this enforces restrictions on the possible lengths
of the codes obtained: $t^m-1$ is two-sided if and only if
the order $n$ of $\sigma$ divides $m$ \cite[(15)]{P66}.

If $Rf$ denotes the left ideal generated by an element
$f\in R$, $R$ a ring, then $R/Rf$ is a left $R$-module.
 In \cite{BU09}, linear codes associated with left $R$-submodules
$Rg/Rf$  of  $R/Rf$ are considered, where $R=\mathbb{F}_q[t;\sigma]$ and $g$ is a right divisor of $f$. These codes are
called \emph{module $\sigma$-codes}.
 Another generalization is discussed in \cite{BU14} and \cite{BL13}, where codes obtained from
 submodules of the $R$-module $R/Rf$
 for some monic  $f\in R$ are investigated, where now $R=\mathbb{F}_q[t;\sigma,\delta]$ is a skew-polynomial ring.

We show that all these approaches can be unified since the codes mentioned above are associated to the left ideals
of the nonassociative algebra $S_f$ defined by Petit \cite{P66}. For a unital division ring $D$
(which here will be a finite field), and a polynomial
$f$ in the skew-polynomial ring $R=D[t;\sigma,\delta]$, Petit defined a nonassociative ring on the set
$R_m=\{h\in D[t;\sigma,\delta]\,|\, {\rm deg}(h)<m \}$, using right division
$g\circ h=gh \,\,{\rm mod}_r f $ to define the algebra multiplication.
$S_f=(R_m,\circ)$ is a nonassociative algebra over $F_0=\{a\in D\,|\, ah=ha \text{ for all } h\in S_f\}$
whose left ideals are generated by the polynomials $g$ which are right divisors of $f$.

The scenarios treated with respect to the linear codes mentioned above all require $f$ to be reducible, so the corresponding, not necessarily
associative, algebra $S_f$ is not allowed to be a division algebra here.
 The cyclic submodules studied in \cite{BU14}, \cite{BU14.2}  are exactly the left ideals in the algebra $S_f$.
The $(\sigma,\delta)$-codes of \cite{BL13} are the codes $\mathcal{C}$ associated to a left ideal of $S_f$ generated by a right divisor $g$ of $f$
 with $f\in K[t;\sigma,\delta]$. We show that if
  $\sigma$ is an automorphism of $K=\mathbb{F}_q$ and $\mathcal{C}$ a linear code over $\mathbb{F}_q$ of length $m$,
then $\mathcal{C}$ is a $\sigma$-constacyclic code with constant $d$ iff the skew-polynomial representation
$\mathcal{C}(t)$ with elements $a(t)$ obtained from $(a_0,\dots,a_{m-1})\in \mathcal{C}$
is a left ideal of $S_f$ with $f=t^m-d\in R=\mathbb{F}_q[t;\sigma]$,  generated by a monic right divisor $g$ of
$f$ in $R$.

\section{Preliminaries}

\subsection{Nonassociative algebras}

Let $F$ be a field and let $A$ be a finite-dimensional $F$-vector space.
We call $A$ an \emph{algebra} over $F$ if there exists an
$F$-bilinear map $A\times A\to A$, $(x,y) \mapsto x \cdot y$, denoted simply by juxtaposition $xy$,
the  \emph{multiplication} of $A$.
An algebra $A$ is called \emph{unital} if there is
an element in $A$, denoted by 1, such that $1x=x1=x$ for all $x\in A$.
We will only consider unital algebras.

   An algebra $A\not=0$ is called a {\it division algebra} if for any $a\in A$, $a\not=0$,
the left multiplication  with $a$, $L_a(x)=ax$,  and the right multiplication with $a$, $R_a(x)=xa$, are bijective.
$A$ is a division algebra if and only if $A$ has no zero divisors (\cite{Sch}, pp. 15, 16).

\subsection{Skew-polynomial rings}

In the following, we use results by Jacobson \cite{J96} and Petit \cite{P66}.
Let $D$ be a unital associative division ring, $\sigma$ a ring endomorphism of $D$ and $\delta$ a \emph{left $\sigma$-derivation} of $D$, i.e.
an additive map such that
$$\delta(ab)=\sigma(a)\delta(b)+\delta(a)b$$
for all $a,b\in D$, implying $\delta(1)=0$. The \emph{skew-polynomial ring} $D[t;\sigma,\delta]$
is the set of polynomials
$$a_0+a_1t+\dots +a_nt^n$$
with $a_i\in D$, where addition is defined term-wise and multiplication by
$$ta=\sigma(a)t+\delta(a) \quad (a\in D).$$
 $D[t;\sigma]=D[t;\sigma,0]$ is called a \emph{twisted polynomial ring} and
$D[t;\delta]=D[t;id,\delta]$ a \emph{differential polynomial ring}.
For the special case that $\sigma=id$ and $\delta=0$, we obtain the usual ring of left polynomials $D[t]=D[t;id,0]$, often also
 denoted
 $D_L[t]$ in the literature, with its multiplication given by
 $$(\sum_{i=1}^sa_it^i)(\sum_{i=1}^tb_it^i)=\sum_{i,j}a_ib_jt^{i+j}.$$
  If $D$ has finite dimension over its center and $\sigma$ is a ring automorphism of $D$, then
 $R=D[t;\sigma,\delta]$ is either a twisted polynomial or a differential polynomial ring by a linear change of variables \cite[Thm. 1.2.21]{J96}.
Note also that if $\sigma$  and $\delta$ are $F$-linear maps then $D[t;\sigma,\delta]\cong D[t]$ by a linear change
of variables.

For $f=a_0+a_1t+\dots +a_nt^n$ with $a_n\not=0$ define ${\rm deg}(f)=n$ and ${\rm deg}(0)=-\infty$.
Then ${\rm deg}(fg)={\rm deg} (f)+{\rm deg}(g).$
 An element $f\in R$ is \emph{irreducible} in $R$ if it is no unit and  it has no proper factors, i.e if there do not exist $g,h\in R$ with
 ${\rm deg}(g),{\rm deg} (h)<{\rm deg}(f)$ such
 that $f=gh$.

 $R=D[t;\sigma,\delta]$ is a left principal ideal domain (i.e.,  every left ideal in $R$ is of the form $Rf$)  and
there is a right-division algorithm in $R$ \cite[p.~3]{J96}: for all $g,f\in R$, $g\not=0$, there exist unique $r,q\in R$,
and ${\rm deg}(r)<{\rm deg}(f)$, such that
$$g=qf+r.$$
(We point out that our terminology is the one used by Petit \cite{P66}, Lavrauw and Sheekey \cite{LS}, and in the coding
literature;
it is different from Jacobson's \cite{J96}, who calls what we call right a left division algorithm and vice versa.)

\subsection{How to obtain nonassociative division algebras from skew-polynomial rings}

Let $D$ be a unital associative division algebra, $\sigma$ an injective ring homomorphism and
 $f\in D[t;\sigma,\delta]$ of degree $m$.

 \begin{definition} (cf. \cite[(7)]{P66})
  Let ${\rm mod}_r f$ denote the remainder of right division by $f$.
 Then  $$R_m=\{g\in D[t;\sigma,\delta]\,|\, {\rm deg}(g)<m\}$$
  together with the multiplication
 $$g\circ h=gh \,\,{\rm mod}_r f $$
 becomes a unital nonassociative algebra $S_f=(R_m,\circ)$ over
 $$F_0=\{a\in D\,|\, ah=ha \text{ for all } h\in S_f\}.$$
\end{definition}

 This algebra is also denoted by $R/Rf$ \cite{P66,P68} if we want to make clear which ring $R$ is involved in the
 construction.
 $F_0$  is a subfield of $D$ \cite[(7)]{P66}.

\begin{remark}
For $f(t)=t^m-d \in R=D[t;\sigma]$, the multiplication in $S_f$ is defined via
\[
 (at^i)(bt^j) =
  \begin{cases}
   a\sigma^i(b)   t^{i+j} & \text{if } i+j < m, \\
     a \sigma^i(b)  t^{(i+j)-m}d & \text{if } i+j \geq m,
  \end{cases}
\]
for all $a,b\in D$ and then linearly extended.

\end{remark}

\section{Linear codes associated to left ideals of $S_f$}

Let $K$ be a finite field, $\sigma$ an automorphism of $K$ and $F={\rm Fix}(\sigma)$, $[K:F]=n$.
By a linear base change we could always assume $\delta=0$. However, \cite{BU14}  and \cite{BL13} show that
 this  limits the choices of available codes.

 Unless specified otherwise, let $R=K[t;\sigma,\delta]$ and $f\in R$ be a monic polynomial of degree $m$.
Analogously as for instance
in  \cite{BU14}, \cite{BU14.2}, \cite{BU09}, \cite{BU09.2}, we associate to an element
$a(t)=\sum_{i=0}^{m-1}a_it^i$ in $S_f$ the  vector $(a_0,\dots,a_{m-1})$. Our codes
$\mathcal{C}$ of length $m$ consist of all $(a_0,\dots,a_{m-1})$ obtained this way from the elements
 $a(t)=\sum_{i=0}^{m-1}a_it^i$ in a left ideal $I$ of $S_f$.
 Conversely, for a linear code $\mathcal{C}$ of length $n$ we denote by $\mathcal{C}(t)$ the set of skew-polynomials
 $a(t)=\sum_{i=0}^{m-1}a_it^i\in S_f$ associated to the codewords $(a_0,\dots,a_n)\in \mathcal{C}$.

\begin{proposition}  Let $D$ be a unital associative division ring and $f\in R=D[t;\sigma,\delta]$.
\\ (i) All left ideals in $S_f$ are generated by some monic right divisor $g$ of $f$ in $R$.
\\ (ii) If $f$ is irreducible, then $S_f$ has no non-trivial left ideals.
\end{proposition}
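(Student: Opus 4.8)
The plan is to prove both parts by exploiting the right-division algorithm in $R$ and the description of the multiplication in $S_f$ as $g\circ h = gh \ \mathrm{mod}_r f$. For part (i), let $I$ be a left ideal of $S_f$. First I would pull $I$ back to $R$: consider the set $J = \{ h \in R : h \ \mathrm{mod}_r f \in I \}$, which contains $Rf$. The key observation is that $J$ is a left ideal of the genuine ring $R$: it is clearly closed under addition, and for $p \in R$ and $h \in J$, writing $h = qf + (h\ \mathrm{mod}_r f)$ we get $ph = pqf + p(h\ \mathrm{mod}_r f)$, so $ph \ \mathrm{mod}_r f = p(h\ \mathrm{mod}_r f)\ \mathrm{mod}_r f$; but the right-hand side, computed in $S_f$, is a sum of terms of the form $(c t^i)\circ(h\ \mathrm{mod}_r f)$ with $c \in D$, each of which lies in $I$ because $I$ is a left $S_f$-ideal (here one uses that left multiplication by $t$ in $S_f$ sends $I$ into $I$, and iterating gives multiplication by any monomial $t^i$, then by $D$-linearity any element of $R_m$). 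Hence $ph \in J$. Since $R$ is a left principal ideal domain, $J = Rg$ for some $g \in R$, which we may take monic (the leading coefficient is a unit in the division ring $D$); and since $Rf \subseteq Rg$, $g$ is a right divisor of $f$. Finally I would check $I = S_f g := \{ a \circ g : a \in R_m\}$, i.e. that $I$ is exactly the image of $Rg$ in $R_m$ under $\ \mathrm{mod}_r f$: the inclusion of that image in $I$ follows as above, and conversely any element of $I$ lies in $J = Rg$ by construction, so reducing mod $f$ it lies in the image.

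For part (ii), suppose $f$ is irreducible and let $I$ be a nonzero left ideal of $S_f$. By part (i), $I$ is generated by a monic right divisor $g$ of $f$ in $R$, say $f = hg$ with $h \in R$. Irreducibility of $f$ forces either $\deg g = 0$ or $\deg h = 0$. If $\deg g = 0$ then $g$ is a unit of $D \subseteq R_m$, so $g$ is invertible in $S_f$ — more carefully, $g \circ g^{-1} = g g^{-1} = 1$ since the product has degree $0 < m$ so no reduction occurs — whence $I = S_f$, the trivial (improper) ideal. If $\deg h = 0$ then $g = h^{-1} f$ has degree $m$, so $g \notin R_m$; the only monic right divisor of $f$ of degree $< m$ that could generate a left ideal sitting inside $R_m$ is then... here I should instead argue directly: the generator $g$ produced in part (i) satisfies $g \in J = Rg$ and $J \supseteq Rf$ with $\deg g \le$ something — actually the cleanest route is: $I$ nonzero means its preimage $J = Rg$ strictly contains... no, $J$ always contains $Rf$ properly is false. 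Let me restate: if $I = 0$ then $J = Rf$ so $g = f$ (up to unit), giving $\deg g = m$; if $I \ne 0$, pick $0 \ne a \in I \subseteq R_m$, then $a \in J = Rg$, so $g \mid_r a$ hence $\deg g \le \deg a < m$, so $g \ne f$, and irreducibility of $f = hg$ with $\deg g < m$ forces $\deg g = 0$, so $g$ is a unit and $I = S_f$. Thus $S_f$ has no left ideals other than $0$ and $S_f$.

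The main obstacle, and the point deserving the most care, is verifying that the pullback $J$ is genuinely a \emph{two-sided-respecting} left ideal of the \emph{associative} ring $R$ despite $S_f$ being only nonassociative: the identity $p(h \ \mathrm{mod}_r f) \ \mathrm{mod}_r f = (\text{$S_f$-product of } p\text{'s monomials with } h\ \mathrm{mod}_r f)$ must be unpacked so that each elementary left multiplication by a monomial $t^i$ (or scalar) is recognized as the $S_f$-operation $L_{t^i}$, which preserves $I$ by definition of left ideal; associativity is not needed because we only ever compose these left multiplications one monomial at a time and sum. Once that bookkeeping is pinned down, everything else is the standard principal-ideal-domain argument together with the degree count, and part (ii) is an immediate corollary of part (i) plus irreducibility.
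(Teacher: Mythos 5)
Your argument is correct, but it takes a different route from the paper. The paper works entirely inside $S_f$, exactly as in \cite[Lemma 1]{BGU07}: it picks a monic element $g\in I$ of minimal degree, right-divides an arbitrary $p\in I$ by $g$, observes that $r=p-qg\in I$ (implicitly using that $q\in R_m$ and $q\circ g=qg$ because $\deg(qg)=\deg p<m$, so no reduction occurs), and concludes $r=0$ by minimality; part (ii) is then immediate. You instead pull $I$ back to the preimage $J=\{h\in R: h\ \mathrm{mod}_r f\in I\}$, prove $J$ is a left ideal of the associative ring $R$, and invoke the left-PID property of $R$ to write $J=Rg$. What your route buys is that the right-divisibility $g\mid_r f$ comes for free from $Rf\subseteq Rg$ — a point the paper's proof actually leaves implicit, since as written it only establishes $I=Rg\cap R_m$ and would still need the extra step of dividing $f$ by $g$ (the remainder is, up to sign, $q\circ g$ reduced, an element of $I$ of degree $<\deg g$, hence $0$). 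What it costs is precisely the bookkeeping you flag: the compatibility $(tv)\ \mathrm{mod}_r f=(t(v\ \mathrm{mod}_r f))\ \mathrm{mod}_r f$ and its scalar analogue, so that $(ct^ih_0)\ \mathrm{mod}_r f$ is obtained by iterating the $S_f$-operations $L_t$ and $L_c$ on $h_0\in I$ (note that for $i\geq m$ the monomial $t^i$ is not itself in $R_m$, so one must iterate $L_t$ rather than apply a single "$L_{t^i}$", and the iteration tacitly needs $t\in R_m$, i.e. $m\geq 2$; the case $m\leq 1$ is trivial since then $S_f\cong D$). Your part (ii), despite the visible mid-course correction, ends with a sound argument: a nonzero element $a\in I$ forces $\deg g\leq\deg a<m$, and irreducibility of $f=hg$ then forces $\deg g=0$, $g=1$, $I=S_f$ — the same degree count the paper subsumes under "(ii) follows from (i)".
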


\begin{proof}
(i) The proof is analogous to the one of \cite[Lemma 1]{BGU07}, only that now we are working in the  nonassociative
 ring $S_f$:
Let $I$ be a left ideal of $S_f$. If $I=\{0\}$ then $I=(0)$. So suppose $I\not=(0)$ and choose a monic non-zero
polynomial $g$ in $I\subset R_m$
of minimal degree. For $p\in I\subset R_m$, a right division by $g$ yields unique $r,q\in R$ with
${\rm deg}(r)<{\rm deg}(g)$ such that
$$p=qg+r $$
and hence $r=p-qg\in I$. Since we chose $g\in I$ to have minimal degree, we conclude that $r=0$, implying
$p=qg$ and so $I=Rg$.
\\ (ii) follows from (i).
\end{proof}

\begin{corollary}
 (i) The cyclic submodules studied in \cite{BU14}, \cite{BU14.2} are exactly the left ideals in the algebra $S_f$
where $f\in \mathbb{F}_q[t;\sigma,\delta]$.
\\ (ii) The $(\sigma,\delta)$-codes $\mathcal{C}$ in \cite{BL13} are exactly the codes  associated to a left ideal of $S_f$ generated by a
non-trivial right divisor $g$ of $f\in D[t;\sigma,\delta]$,
 whenever $D$ is an associative division ring (usually, $D=\mathbb{F}_q$).
\end{corollary}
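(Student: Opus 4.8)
The plan is to unwind the definitions in both corollary statements and check that the objects named in the cited papers literally coincide with the left ideals of $S_f$ described in Proposition. For part (i), the cyclic submodules of \cite{BU14}, \cite{BU14.2} are by definition the $R$-submodules of $R/Rf$ (with $R=\mathbb{F}_q[t;\sigma,\delta]$) that are generated by a single element, i.e. submodules of the form $Rg/Rf$ where $g$ is (without loss of generality, after replacing $g$ by its right gcd with $f$) a monic right divisor of $f$. First I would recall that the underlying set $R/Rf$ is exactly $R_m$, and that the $R$-module action $p \cdot (h + Rf) = ph + Rf$ is precisely $p \circ h$ once $p$ is reduced mod${}_r f$; hence an additive subset of $R_m$ closed under the module action is the same thing as an additive subset closed under $\circ$ from the left, which is exactly a left ideal of $S_f$. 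Then I would invoke Proposition~(i) to identify these with the principal ones generated by monic right divisors of $f$, which closes part (i).

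For part (ii), I would recall the definition of a $(\sigma,\delta)$-code from \cite{BL13}: one fixes a monic $f\in D[t;\sigma,\delta]$ (typically $D=\mathbb{F}_q$) and a right divisor $g$ of $f$, and the code is the image in $D^m$ of the submodule $Rg/Rf \subseteq R/Rf$ under the coordinate map $a(t)=\sum a_i t^i \mapsto (a_0,\dots,a_{m-1})$. By the reduction already carried out in part (i), $Rg/Rf$ is exactly the left ideal of $S_f$ generated by $g$; since \cite{BL13} asks $g$ to be a nontrivial (proper) right divisor so that the code is neither zero nor everything, this matches ``left ideal generated by a non-trivial right divisor $g$ of $f$'' verbatim. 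Conversely, by Proposition~(i) every left ideal of $S_f$ is generated by such a $g$, so the two families of codes coincide. I would also remark that the coordinate map is an $F_0$-linear bijection $S_f \to D^m$ so that ``code'' (an $\mathbb{F}_q$-subspace of $\mathbb{F}_q^m$) and ``left ideal of $S_f$'' transport to one another without loss of information, which is what makes the identification an equality of sets of codes rather than merely a correspondence.

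The only real subtlety, and the step I would treat most carefully, is the bookkeeping around the generator: a given submodule has many generators, and one must check that one can always choose a \emph{monic right divisor of $f$} as generator, and that in \cite{BL13} the hypotheses on $g$ (monic, right divides $f$, proper) are exactly these. This is precisely the content of the division-algorithm argument in the proof of Proposition~(i): given any generator $p$, right-divide $f=qg_0+r$ is not quite it — rather, one takes the monic element of minimal degree in the ideal, shows it right-divides every element including (the image of) $f$ itself, hence is a right divisor of $f$ in $R$. So for the corollary I would simply cite Proposition~(i) for this and spend a sentence noting that the normalization conventions in \cite{BU14}, \cite{BU14.2}, \cite{BL13} agree with ours. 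The rest is definitional unwinding; there is no genuine obstacle, the work is in making the dictionary between the module language of the cited papers and the nonassociative-algebra language of $S_f$ explicit.
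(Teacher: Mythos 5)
Your overall plan (definitional unwinding plus the Proposition) is exactly the deduction the paper intends -- the corollary is stated without proof as an immediate consequence of the Proposition and of the definitions in the cited papers -- but one step of your dictionary is false precisely in the nonassociative case that the paper is about. You claim that the $R$-module action $p\cdot(h+Rf)=ph+Rf$ ``is precisely $p\circ h$ once $p$ is reduced $\mathrm{mod}_r\, f$'', and from this you deduce that a subset of $R_m$ is closed under the $R$-action if and only if it is closed under $\circ$. The identity $(ph)\,\mathrm{mod}_r f=\bigl((p\,\mathrm{mod}_r f)\,h\bigr)\,\mathrm{mod}_r f$ holds when $f$ is two-sided (then $fh\in Rf$), but fails in general: take $K=\mathbb{F}_4$, $\sigma$ the Frobenius automorphism, $f=t^2-\omega$ with $\omega\notin\mathbb{F}_2$, $p=t^2$, $h=t$. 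Then $t^3=t(t^2-\omega)+\omega^2t$, so the module action gives the representative $\omega^2 t$, whereas $(t^2\,\mathrm{mod}_r f)\circ t=\omega\circ t=\omega t\neq\omega^2 t$. Hence the direction ``left ideal of $S_f$ $\Rightarrow$ $R$-submodule of $R/Rf$'' is not justified as written; only the direction ``submodule $\Rightarrow$ left ideal'' follows from your observation, restricted to $p\in R_m$ (where it is correct, since then $p\cdot(h+Rf)$ has canonical representative $(ph)\,\mathrm{mod}_r f=p\circ h$).

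The gap is easy to close, in either of two ways consistent with the paper. Either use the Proposition for this direction as well: a left ideal $I$ of $S_f$ equals $Rg\cap R_m$ for a monic right divisor $g$ of $f$, and since $Rg$ is a left ideal of the associative ring $R$ containing $Rf$, the corresponding subset $Rg/Rf$ of $R/Rf$ is an $R$-submodule (cyclic, generated by $g+Rf$). Or argue by iteration in the spirit of the paper's Theorem: for $h\in R_m$ one has $t^ih\;\mathrm{mod}_r f=t\circ\bigl(t^{i-1}h\;\mathrm{mod}_r f\bigr)$ and $a(t^ih)\;\mathrm{mod}_r f=a\circ\bigl(t^ih\;\mathrm{mod}_r f\bigr)$, because $t(qf)=(tq)f\in Rf$ and $a(qf)=(aq)f\in Rf$ by associativity of $R$; so closure under $\circ$ by $t$ and by constants already forces closure under the action of every $p\in R$. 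With either repair your part (i) is complete, and part (ii) then goes through as you describe, since under the identification $Rg/Rf\leftrightarrow Rg\cap R_m$ the submodule generated by a right divisor $g$ is the left ideal of $S_f$ generated by $g$, and the nontriviality conventions match. The remaining bookkeeping about monic generators is, as you say, supplied by the Proposition.
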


  Note that when we look at the nonassociative  case, where $f$ is not two-sided anymore,
  it can happen that $f$ is irreducible in $K[t;\sigma,\delta]$, hence does not have any non-trivial right divisors $g$.

\begin{remark}
Let $m\geq 2$.
Since for $a(t)\in S_f$ also $ta(t)\in S_f$, we obtain  for  $f(t)=t^m-d\in K[t;\sigma]$ that
$$ta(t)=\sigma(a_0)t+\sigma(a_1)t^2+\dots+\sigma(a_{m-1})t^m=
\sigma(a_{m-1})d+\sigma(a_0)t+\sigma(a_1)t^2+\dots+\sigma(a_{m-2})t^{m-1}$$
in $S_f$,
so that
$$(a_0,\dots,a_{m-1})\in  \mathcal{C}\Rightarrow (\sigma(a_{m-1})d,\sigma(a_0),\dots,\sigma(a_{m-2}))\in  \mathcal{C}$$
 is a $\sigma$-constacyclic code (even if $S_f$ is division).
 With the same argument, every left ideal $Rg$ in $S_f$ with $g\in R$ a right divisor of $f=t^m-d$  yields a
$\sigma$-constacyclic code $\mathcal{C}$ for $d\not=1$ and a $\sigma$-cyclic code for $d=1$.
\end{remark}

In \cite[Theorem 1]{BGU07} it is shown that the code words of a $\sigma$-cyclic code are coefficient tuples of elements
$a(t)=\sum_{i=0}^{m-1}a_it^i\in \mathbb{F}_q[t;\sigma]/(t^m-1)$, which are left multiples of some element $g\in
\mathbb{F}_q[t;\sigma]/(t^m-1)$ which is a right divisor of $f$, under the assumption that
 the order $n$ of $\sigma$ divides $m$. The assumption that $n$ divides $m$ guarantees that $Rf$ is a two-sided ideal,
  i.e.
 that $S_f$ is associative, but is not required:

\begin{theorem}
Let $\sigma$ be an automorphism of $K=\mathbb{F}_q$ and $\mathcal{C}$ a linear code over $\mathbb{F}_q$ of length $m$.
Then $\mathcal{C}$ is a $\sigma$-constacyclic code (with constant $d$) iff the skew-polynomial representation
$\mathcal{C}(t)$ with elements $a(t)$ obtained from $(a_0,\dots,a_{m-1})\in \mathcal{C}$
is a left ideal of $S_f$ with $f=t^m-d\in R=\mathbb{F}_q[t;\sigma]$,  generated by a monic right divisor $g$ of
$f$ in $R$.
\end{theorem}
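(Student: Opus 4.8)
The plan is to prove the two implications separately. The implication ``$\mathcal{C}(t)$ a left ideal of $S_f$ $\Rightarrow$ $\mathcal{C}$ is $\sigma$-constacyclic'' is essentially already contained in the Remark preceding the theorem together with the Proposition above, while the converse requires a short argument upgrading ``closed under left multiplication by $t$'' to ``closed under left multiplication by every element of $S_f$''.

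For the reverse direction, suppose $\mathcal{C}(t)$ is a left ideal of $S_f$ with $f=t^m-d\in R=\mathbb{F}_q[t;\sigma]$. By the Proposition above any left ideal of $S_f$ is automatically generated by a monic right divisor of $f$, so the divisor clause in the statement carries no extra content. Since the left ideal $I=\mathcal{C}(t)$ is in particular closed under $a(t)\mapsto t\circ a(t)$, I would invoke the explicit formula for $t\circ a(t)$ in $S_f$ recorded in the Remark: the coefficient vector of $t\circ a(t)$ equals $(\sigma(a_{m-1})d,\sigma(a_0),\dots,\sigma(a_{m-2}))$. Hence $\mathcal{C}$ is stable under the $\sigma$-constacyclic shift with constant $d$, which is exactly the claim.

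For the forward direction, assume $\mathcal{C}$ is $\sigma$-constacyclic with constant $d$, put $f=t^m-d$, and let $I=\mathcal{C}(t)\subseteq S_f$. The $\mathbb{F}_q$-linearity of $\mathcal{C}$ makes $I$ an $\mathbb{F}_q$-subspace of the coefficient space, which has $\mathbb{F}_q$-basis $1,t,\dots,t^{m-1}$, and by the Remark the $\sigma$-constacyclic condition translates into $t\circ I\subseteq I$. To deduce $S_f\circ I\subseteq I$ I would use two elementary facts about right division by the monic polynomial $f$: (a) $(xy)\,{\rm mod}_r f=\big(x\,(y\,{\rm mod}_r f)\big)\,{\rm mod}_r f$ for all $x,y\in R$, since the difference of the two arguments is a left multiple of $f$ and $\deg$ is additive; this gives $t^i\circ a(t)=t\circ(t^{i-1}\circ a(t))$, hence $t^i\circ I\subseteq I$ for all $i\ge 0$ by induction on $i$; and (b) $(\lambda h)\circ a(t)=\lambda\,(h\circ a(t))$ for $\lambda\in\mathbb{F}_q$, so that $h\mapsto h\circ a(t)$ is $\mathbb{F}_q$-linear. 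Writing an arbitrary $h\in S_f$ as $h=\sum_{i=0}^{m-1}b_it^i$ with $b_i\in\mathbb{F}_q$, fact (b) then yields $h\circ a(t)=\sum_{i=0}^{m-1}b_i\,(t^i\circ a(t))\in I$. Thus $I$ is a left ideal of $S_f$, and the Proposition above produces the monic right divisor $g$ of $f$ in $R$ generating it (trivially $I=(0)$ in the degenerate case $\mathcal{C}=0$).

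The step I expect to need the most care is the interplay between the field $\mathbb{F}_q$ over which $\mathcal{C}$ is linear and the base field $F_0={\rm Fix}(\sigma)$ of the algebra $S_f$: the multiplication $\circ$ is only $F_0$-bilinear, and in general $h\circ(\lambda a(t))\neq\lambda\,(h\circ a(t))$ when $\lambda\in\mathbb{F}_q\setminus F_0$. It is therefore essential that the forward direction uses only the left $\mathbb{F}_q$-homogeneity of $\circ$, i.e.\ fact (b), which does hold because a constant can be pulled out of a product on the left in $R$ and survives reduction ${\rm mod}_r f$. Everything else reduces to routine bookkeeping with the right-division algorithm.
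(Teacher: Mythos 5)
Your proposal is correct and follows essentially the same route as the paper: the backward implication is exactly the Remark preceding the theorem (plus Proposition (i) for the generator), and the forward implication is the paper's argument of iterating the constacyclic shift to get $t^s\circ a(t)\in\mathcal{C}(t)$ and then using linearity to reach all of $S_f$. Your facts (a) and (b) simply make explicit the ``by iteration and linearity'' step that the paper leaves implicit, and your caution about $F_0$ versus $\mathbb{F}_q$ is well placed but does not change the argument.
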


\begin{proof}
$\Leftarrow:$ This is Remark 8.

$\Rightarrow:$ The argument is analogous to the proof of \cite[Theorem 1]{BGU07}:
 If we have a $\sigma$-constacyclic code $\mathcal{C}$, then its elements
  define polynomials $a(t)\in \mathbb{F}_q[t;\sigma]=K[t;\sigma]$. These form a left ideal
  $\mathcal{C}(t)$ of $S_f$
  with $f=t^m-d\in \mathbb{F}_q[t;\sigma]$:
  The code is linear  and so the skew-polynomial
  representation $\mathcal{C}(t)$ is an additive group. For $(a_0,\dots,a_{m-1})\in \mathcal{C}$,
  $$ta(t)=\sigma(a_0)t+\sigma(a_1)t^2+\dots+\sigma(a_{m-1})t^m$$
and since $f=t^m-d$ we get in $S_f=R/Rf$ that
  $$ta(t)=\sigma(a_{m-1})d+\sigma(a_0)t+\sigma(a_1)t^2+\dots+\sigma(a_{m-2})t^{m-1}.$$
  Since $\mathcal{C}$ is $\sigma$-constacyclic with constant $d$, $ta(t)\in \mathcal{C}(t)$.
  Clearly, by iterating this argument, also $t^s a(t)\in \mathcal{C}(t)$ for all $s\leq m-1$.
By iteration and linearity of $\mathcal{C}$, thus $h(t)a(t)\in \mathcal{C}(t)$ for all $h(t)\in R_m$, so
$\mathcal{C}(t)$ is closed under multiplication
and a left ideal of $S_f$.
\end{proof}

\begin{corollary}
Let $\sigma$ be an automorphism of $K=\mathbb{F}_q$ and $\mathcal{C}$ a linear code over $\mathbb{F}_q$ of length $m$.
Then $\mathcal{C}$ is a $\sigma$-cyclic code iff the skew-polynomial representation
$\mathcal{C}(t)$ with elements $a(t)$ obtained from $(a_0,\dots,a_{m-1})\in \mathcal{C}$
is a left ideal  of $S_f$ generated by a monic right divisor $g$ of $f=t^m-1\in R=\mathbb{F}_q[t;\sigma]$.
\end{corollary}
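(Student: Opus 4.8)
The plan is to obtain this statement as the case $d=1$ of Theorem 9. First I would observe that a $\sigma$-cyclic code is, by definition, exactly a $\sigma$-constacyclic code with constant $d=1$: the shift rule $(a_0,\dots,a_{m-1})\in\mathcal{C}\Rightarrow(\sigma(a_{m-1}),\sigma(a_0),\dots,\sigma(a_{m-2}))\in\mathcal{C}$ is the $\sigma$-constacyclic rule specialized to $d=1$, and $f=t^m-1$ is the case $d=1$ of $f=t^m-d$ in $R=\mathbb{F}_q[t;\sigma]$. Hence both the hypothesis and the conclusion of the Corollary are literally the $d=1$ instances of those in Theorem 9, and the proof reduces to one sentence: apply Theorem 9 with $d=1$.

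Should it be preferable to have the two implications written out, I would record them as follows. For ``$\Leftarrow$'': if $\mathcal{C}(t)=Rg$ with $g$ a monic right divisor of $f=t^m-1$ in $R$, then the computation of Remark 8 (taken with $d=1$) shows that $\mathcal{C}$ is $\sigma$-cyclic. For ``$\Rightarrow$'': given a $\sigma$-cyclic code $\mathcal{C}$, its codewords define polynomials $a(t)=\sum_{i=0}^{m-1}a_it^i\in\mathbb{F}_q[t;\sigma]$, i.e.\ elements of $R_m$, and $\mathcal{C}(t)$ is an additive subgroup since $\mathcal{C}$ is linear; computing in $S_f=R/Rf$,
$$ta(t)=\sigma(a_0)t+\dots+\sigma(a_{m-1})t^m=\sigma(a_{m-1})+\sigma(a_0)t+\dots+\sigma(a_{m-2})t^{m-1}\in\mathcal{C}(t)$$
by $\sigma$-cyclicity; iterating gives $t^sa(t)\in\mathcal{C}(t)$ for all $s\leq m-1$, and then linearity of $\mathcal{C}$ yields $h(t)a(t)\in\mathcal{C}(t)$ for every $h(t)\in R_m$; so $\mathcal{C}(t)$ is a left ideal of $S_f$, and Proposition 6(i) identifies it as one generated by a monic right divisor of $f$ in $R$.

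I do not expect a real obstacle: the whole of the argument already sits in Theorem 9, and this Corollary is its $d=1$ specialization. The only things deserving a moment's attention are bookkeeping — checking that ``$\sigma$-cyclic'' as used in the references being unified agrees with ``$\sigma$-constacyclic with constant $1$'', that the identity $t^m\equiv 1$ in $S_{t^m-1}$ indeed reproduces the cyclic shift (both immediate from $g\circ h=gh\,{\rm mod}_r f$), and that $S_{t^m-1}$ is an $\mathbb{F}_q$-algebra, so that ``left ideal'' carries the intended meaning (with $\delta=0$ and $D=\mathbb{F}_q$ one has $F_0={\rm Fix}(\sigma)$, though the base field plays no role in the statement).
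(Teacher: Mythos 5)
Your proposal is correct and matches the paper's intent exactly: the Corollary is stated without a separate proof precisely because it is the $d=1$ specialization of Theorem 9, which is what you invoke, and your spelled-out version of the two implications simply reproduces Remark 8 and the proof of Theorem 9 with $d=1$. Nothing further is needed.
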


\begin{remark}\label{re:8}
Let  $f(t)=t^m-d\in R=K[t;\sigma]$ and $F={\rm Fix}(\sigma)$. Then $f$ is a two-sided element
(thus $S_f$ associative and $f$ reducible)
  iff $m$ divides the order $n$ of $\sigma$
  and $d\in F$  by \cite[(7), (9)]{P66}. For $d=1$ in particular, $f$ is two-sided iff
$m$ divides the order $n$ of $\sigma$.
Any right divisor $g$  of degree $k$ of, for instance, $f=t^m-d$
 can be used to construct a $\sigma$-constacyclic $[m,m-k]$-code (with constant $d$).
  (If $f$ is not two-sided,
  it can happen that $f$ is irreducible in $K[t;\sigma]$, hence
 does not have any non-trivial right divisors $g$.)
 We note:
\\ (i) $f(t)=t^3-d$ is reducible in $R$ if and only if
$$\sigma(z)^2\sigma(z)z=d \text{ or }\sigma(z)^2\sigma(z)z=d$$
 for some $z\in K$ \cite[(18)]{P66} . (Thus $t^3-1$ is always reducible in $K[t;\sigma].$)
 \\ (ii) Suppose $m$ is prime and $F$ contains a primitive $m$th root of unity.
Then $f(t)=t^m-d$ is reducible in $R$  if and only if
$$d=\sigma^{m-1}(z)\cdots\sigma(z)z
\text{ or }\sigma^{m-1}(d)=\sigma^{m-1}(z)\cdots\sigma(z)z$$
 for some $z\in K$ \cite[(19)]{P66}. (Thus $t^m-1$ is always reducible in $K[t;\sigma]$, if
 $F$ contains a primitive $m$th root of unity.)
\\ (iii) Let $K/F$  have degree $m$, ${\rm Gal}(K/F)=\langle\sigma\rangle$ and $R=K[t;\sigma]$,
$f=t^m-d$ with $d\not\in F$. Then the nonassociative cyclic algebra $(K/F,\sigma,d)$ is the algebra
$S_f$ with $R=K[t;\sigma^{-1}]$ and $f(t)=t^m-d$ (cf. \cite[p.~13-13]{P66}).
\\ (a) If the elements $1,d,\dots,d^m$ are linearly dependent over $F$, then $f$ is reducible.
\\ (b) If $m$ is prime  then $f$ is irreducible \cite{S12} and thus there are no $\sigma$-constacyclic codes with constant
$d$ apart from the $[m,m]$-code associated with $S_f$ itself.
\end{remark}

When working over finite fields, the division algebras $S_f$ are finite semifields which are closely related to the semifields constructed by
Johnson and Jha \cite{JJ} obtained by employing semi-linear transformations. Results for these semifields and their spreads might be
useful for future linear code constructions.

\end{document}